\newcommand{\xmark}{\ding{55}}%
\definecolor{shaderulecolor}{rgb}{1,1,1} 
\colorlet{shadethmcolor}{green!80!black!10}
\colorlet{GreenBlack}{green!60!black!80}
 \renewenvironment{proof}{{\noindent\bfseries Proof.}}{\hfill{\color{GreenBlack}{\leafNE}}}
\colorlet{lblue}{blue!50!white}
\colorlet{lred}{red!50!white}
\colorlet{lgreen}{green!50!white}
\colorlet{lpurple}{purple!50!white}
\colorlet{lorange}{orange!50!white}
\colorlet{lpink}{pink!50!white}
\colorlet{lbrown}{brown!50!white}
\colorlet{lyellow}{yellow!50!white}
\colorlet{lolive}{olive!50!white}
\let\@footnotetext\ltx@footnotetext
\newcommand{\node}[1][white]{
\begin{tikzpicture}
\tikzstyle{noeud}=[draw,circle,fill=white,scale=0.7]
\node[noeud,fill=#1] (0) at (0,0) {};
\end{tikzpicture}
}
\tikzset{->-/.style={decoration={
  markings,
  mark=at position .5 with {\arrow{>}}},postaction={decorate}}}
\DeclareMathOperator{\parent}{\mathcal{P}}
\DeclareMathOperator{\children}{\mathcal{C}}
\DeclareMathOperator{\depth}{\mathcal{D}}
\DeclareMathOperator{\leaves}{\mathcal{L}}
\DeclareMathOperator{\multiset}{\children_c}
\DeclareMathOperator{\length}{length}
\DeclareMathOperator{\width}{\mathcal{W}}
\DeclareMathOperator{\red}{\mathfrak{R}}
\title{Revisiting Tree Isomorphism:\\ An Algorithmic Bric-à-Brac}
\author{Florian Ingels\\
\url{florian.ingels@univ-lille.fr}}
\date{Univ. Lille, CNRS, Centrale Lille, UMR 9189 CRIStAL, F-59000 Lille, France}
\begin{document}

\maketitle

\begin{abstract}
\noindent
The Aho, Hopcroft and Ullman (AHU) algorithm has been the state of the art since the 1970s for determining in linear time whether two unordered rooted trees are isomorphic or not. However, it has been criticized (by Campbell and Radford) for the way it is written, which requires several (re)readings to be understood, and does not facilitate its analysis. In this article, we propose a different, more intuitive formulation of the algorithm, as well as three propositions of implementation, two using sorting algorithms and one using prime multiplication. Although none of these three variants admits linear complexity, we show that in practice two of them are competitive with the original algorithm, while being straightforward to implement. Surprisingly, the algorithm that uses multiplications of prime numbers (which are also be generated during the execution) is competitive with the fastest variants using sorts, despite having a worst theoretical complexity.  We also adapt our formulation of AHU to tackle to compression of trees in directed acyclic graphs (DAGs). This algorithm is also available in three versions, two with sorting and one with prime number multiplication. Our experiments are carried out on trees of size at most $10^6$, consistent with the actual datasets we are aware of, and done in Python with the library \texttt{treex}, dedicated to tree algorithms.

\medskip

\noindent\textbf{Keywords:} tree isomorphism, AHU algorithm, prime numbers multiplication, DAG compression  
\end{abstract}

\section{Introduction}

\subsection{Context}

The Aho, Hopcroft and Ullman (AHU) algorithm, introduced in the 1970s \cite[Example~3.2]{Aho1974}, establishes that the tree isomorphism problem can be solved in linear time, whereas the more general graph isomorphism problem is still an open problem today, where no proof of NP-completeness nor polynomial algorithm is known \cite{schoning1988graph}. However, the problem is considered to be solved in practice; powerful heuristics exist, such as the quasi-polynomial algorithm from \cite{babai2016graph}; see also \cite{mckay2014practical}.

As far as we know, AHU remains the only state-of-the-art algorithm for determining, in practice, whether two trees are isomorphic. Recently, Liu \cite{liu2021tree} proposed to represent a tree by a polynomial of two variables, computable in linear time, and where two trees have the same polynomial if and only if they are isomorphic. Unfortunately, the existence of an algorithm to determine the equality of two polynomials in polynomial time is still an open question \cite{saxena2009progress}. We should also mention \cite{buss1997alogtime}, which proposes an alternating logarithmic time algorithm for tree isomorphism -- under NC complexity class framework, that is, problems efficiently solvable on a parallel computer \cite{barrington1986bounded}.

However, one criticism -- emerging from Campbell and Radford in \cite{campbell1991tree} -- directed at the AHU algorithm is that it is presented in such a way that it is difficult to understand. We leave it to the reader to form their own opinion by reproducing the original text of the algorithm in Section~\ref{ss:ahu}, after a brief introduction of key background in Section~\ref{ss:tree_isom}. To the best of our knowledge, the remark from Campbell and Radford seems to have remained a dead letter in the community, and no alternative, clearer version of the algorithm seems ever to have been published -- with the exception of Campbell and Radford themselves,  which have nevertheless remained fairly close to the original text.

In this article, we propose to revisit the AHU algorithm by giving several alternative versions, all of them easier to understand and straightforward to implement. However, these variants have supra-linear complexity (which is also the case for the Campbell and Radford version). In practice, on trees of reasonable size ($\leq 10^6$), with a Python implementation using the \texttt{treex} library \cite{azais2019treex}, we find that two of the three proposed variants are faster than the original algorithm -- one of them sorts lists of integers (like the original algorithm), while the other replaces this step by calculating the product of a list of primes. We also propose a direct adaptation of our variants to compute tree compression into directed acyclic graphs (DAGs) \cite{godin2009quantifying} -- this time achieving state of the art complexity.

Section~\ref{ss:tree_isom} introduces the notations and definitions useful for the rest of the paper; the original AHU algorithm is presented in Section~\ref{ss:ahu}, as well as the aim of the paper.

\subsection{Tree isomorphisms}\label{ss:tree_isom}

A rooted tree $T$ is a connected directed graph without any undirected cycle such that (i) there exists a special node called the root and (ii) any node but the root has exactly one parent. The parent of a node $u$ is denoted by $\parent(u)$, whereas its children are denoted by $\children(u)$. The leaves $\leaves(T)$ of $T$ are the nodes without any children. Rooted trees are said to be unordered if the order among siblings is not significant; otherwise they are said to be ordered. This paper focuses only on unordered rooted trees, referred to simply as \emph{trees} in the remainder of this article.

The degree of a node $u$ is defined as $\deg(u)=\#\children(u)$ and the degree of a tree $T$ as $\deg(T) = \max_{u\in T}\deg(u)$. The depth $\depth(u)$ of a node $u$ is the length of the path between $u$ and the root. The depth $\depth(T)$ of $T$ is the maximal depth among all nodes. The level of a node $u$ is defined as $\depth(T)-\depth(u)$. The sets of nodes of level $d$ in a tree $T$ is denoted by $T^d$, and the mapping $d\mapsto T^d$ can be constructed in linear time by a simple traversal of $T$.

\begin{definition}
Two trees $T_1$ and $T_2$ are said to be \emph{isomorphic} if there exists a bijective mapping $\varphi : T_1\to T_2$ so that (i) the roots are mapped together and (ii) for any $u,v\in T_1$, $v\in\children(u) \iff \varphi(v)\in\children(\varphi(u))$.
\end{definition}

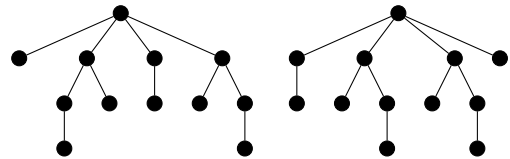
\begin{wrapfigure}[7]{R}{0.4\textwidth}
\vspace{-\baselineskip}
    \centering
\def\xscale{0.6}
\def\yscale{0.6}
\def\nodescale{0.6}
\begin{tikzpicture}[xscale=\xscale,yscale=\yscale]
\tikzstyle{fleche}=[-,>=latex]
\tikzstyle{noeud}=[draw,circle,fill=black,scale=\nodescale]

\node[noeud] (1) at (2.25,1) {};
\node[noeud] (2) at (0,0) {};
\node[noeud] (3) at (3,0) {};
\node[noeud] (4) at (1.5,0) {};
\node[noeud] (5) at (4.5,0) {};
\node[noeud] (6) at (3,-1) {};
\node[noeud] (7) at (1,-1) {};
\node[noeud] (8) at (2,-1) {};
\node[noeud] (9) at (4,-1) {};
\node[noeud] (10) at (5,-1) {};
\node[noeud] (11) at (1,-2) {};
\node[noeud] (12) at (5,-2) {};

\draw[fleche] (1)--(2);
\draw[fleche] (1)--(3);
\draw[fleche] (1)--(4);
\draw[fleche] (1)--(5);
\draw[fleche] (3)--(6);
\draw[fleche] (4)--(7);
\draw[fleche] (4)--(8);
\draw[fleche] (5)--(9);
\draw[fleche] (5)--(10);
\draw[fleche] (7)--(11);
\draw[fleche] (10)--(12);
\end{tikzpicture}\hfill 
\begin{tikzpicture}[xscale=\xscale,yscale=\yscale]
\tikzstyle{fleche}=[-,>=latex]
\tikzstyle{noeud}=[draw,circle,fill=black,scale=\nodescale]

\node[noeud] (1) at (2.25,1) {};
\node[noeud] (2) at (4.5,0) {};
\node[noeud] (3) at (0,0) {};
\node[noeud] (4) at (1.5,0) {};
\node[noeud] (5) at (3.5,0) {};
\node[noeud] (6) at (0,-1) {};
\node[noeud] (7) at (2,-1) {};
\node[noeud] (8) at (1,-1) {};
\node[noeud] (9) at (3,-1) {};
\node[noeud] (10) at (4,-1) {};
\node[noeud] (11) at (2,-2) {};
\node[noeud] (12) at (4,-2) {};

\draw[fleche] (1)--(2);
\draw[fleche] (1)--(3);
\draw[fleche] (1)--(4);
\draw[fleche] (1)--(5);
\draw[fleche] (3)--(6);
\draw[fleche] (4)--(7);
\draw[fleche] (4)--(8);
\draw[fleche] (5)--(9);
\draw[fleche] (5)--(10);
\draw[fleche] (7)--(11);
\draw[fleche] (10)--(12);
\end{tikzpicture}
    \captionof{figure}{Two isomorphic trees.}
    \label{fig:isomorphic_trees}
\end{wrapfigure}

Such a mapping $\varphi$ is called a \emph{tree isomorphism}. An example of isomorphic trees is provided in Figure~\ref{fig:isomorphic_trees}. Whenever two trees $T_1$ and $T_2$ are isomorphic, we note $T_1\simeq T_2$. It is well known that $\simeq$ is an equivalence relation on the set of trees \cite{valiente2002algorithms}. The \emph{tree isomorphism problem} consists in deciding whether two trees are isomorphic or not.




For the broader \emph{graph} isomorphism problem, it is not usual to explicitly construct the isomorphism  $\varphi$ -- let us mention nonetheless \cite[Section~3.3]{fortin1996graph} and \cite{ingels2021isomorphic,azais2023detection} -- but rather to compute a certificate of non-isomorphism. For instance, Weisfeiler-Lehman algorithms, also known as colour refinement algorithms \cite{huang2021short,kiefer2020power}, colour the nodes of each graph according to certain rules, and the histograms of the colour distributions are then compared: if they diverge, the graphs are not isomorphic. This test is not complete in the sense that there are non-isomorphic graphs with the same colour histogram -- even though the distinguishing power of these algorithms is constantly being improved \cite{grohe2021deep}. While the graph isomorphism problem is not solved in the general case, it is solved for trees by virtue of the AHU algorithm, which is built on a colouring principle similar to that of Weisfeiler-Lehman.

\subsection{The Aho, Hopcroft and Ullman algorithm}\label{ss:ahu}

We reproduce below the original text of the algorithm, as introduced in 1974 by Ahu, Hopcroft and Ullman in \cite[Example~3.2]{Aho1974} -- only minor changes have been made to fit the notations used in this paper.

\begin{enumerate}
    \item \cbstart\cbcolor{GreenBlack}First, assign to all leaves in $T_1$ and $T_2$ the integer $0$.
    \item Assume by induction that all nodes at level $d-1$ of $T_1$ and $T_2$ have been assigned an integer. Let $L_1$ (respectively $L_2$) be the list of nodes in $T_1$ (respectively $T_2$) at level $d-1$ sorted by non-decreasing value of the assigned integers.
    \item Assign to the nonleaves of $T_1$ at level $d$ a tuple of integers by scanning the list $L_1$ from left to right and performing the following actions:
    \begin{itemize}
        \item For each vertex on list $L_1$ take the integer assigned to $u$ to be the next component of the tuple associated with $\parent(u)$.
        \item On completion of this step, each nonleaf $w$ of $T_1$ at level $d$ will have a tuple $(i_1, i_2, \dots, i_k)$ associated with it, where $i_1, \dots, i_k$ are the integers, in non-decreasing order, associated with the children of $w$.
        \item Let $S_1$ be the sequence of tuples created for the vertices of $T_1$ on level $d$.
    \end{itemize}
    \item Repeat Step 3 for $T_2$ and let $S_2$ be the sequence of tuples created for the vertices of $T_2$ on level $d$.
    \item Sort $S_1$ and $S_2$ lexicographically. Let $S_1'$ and $S_2'$, respectively, be the sorted sequence of tuples.
    \item If $S_1'$ and $S_2'$ are not identical, then halt: the trees are not isomorphic. Otherwise, assign the integer $1$ to those vertices of $T_1$ on level $d$ represented by the first distinct tuple on $S_1'$, assign the integer $2$ to the vertices represented by the second distinct tuple, and so on. As these integers are assigned to the vertices of $T_1$ on level $d$, replace $L_1$ by the list of the vertices so assigned. Append the leaves of $T_1$ on level $d$ to the front of $L_1$. Do the same for $L_2$. $L_1$ and $L_2$ can now be used for the assignment of tuples to nodes at level $d+1$ by returning to Step 3.
    \item If the roots of $T_1$ and $T_2$ are assigned the same integer, $T_1$ and $T_2$ are isomorphic.\cbend
\end{enumerate}

Note that, in Step~$5$, the authors resort to a variant of radix sort \cite[Algorithm~3.2]{Aho1974}. Actually, the  tree isomorphism problem and AHU algorithm are only introduced in the book as an application example of this sorting algorithm. This algorithm can sort $n$ lists of varying lengths $l_1,\dots,l_n$, containing integers between $0$ and $m-1$, in complexity $O\left(\sum_{i=1}^n l_i + m\right)$. Expressed within our framework, the length of each list is exactly the degree of the associated node, and $m = \#\lbrace c(u):u\in T^d \rbrace$ -- where $c(u)$ designates the integer associated to node $u$.

\begin{theorem}[Aho, Hopcroft \& Ullman]
AHU algorithm runs in $O(n)$ where $n=\#T_1=\#T_2$.
\end{theorem}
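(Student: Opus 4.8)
The plan is to follow the same inductive structure used for \textsc{IdealAHU}, but account carefully for the cost of manipulating tuples rather than assuming a perfect hash. The key observation is that the original algorithm's cleverage comes from the radix sort of Step~5: because the integers assigned at level $d-1$ range only over $\{0,1,\dots,\#T_i^{d-1}\}$, and the tuples at level $d$ have total length $\sum_{w}\deg(w) = \#T_i^{d-1}$, the lexicographic sort of the sequence $S_i$ of tuples can be performed by a variant of radix sort in time linear in the \emph{total} size of the input, namely $O(\#T_i^d + \#T_i^{d-1})$, rather than the naive $O(\text{(number of tuples)} \times \text{(max tuple length)} \times \log)$. I would first state this as the crucial sub-claim and justify it by appealing to \cite[Algorithm~3.2]{Aho1974}: sorting a list of $p$ tuples of varying lengths with components drawn from $\{0,\dots,m\}$ costs $O(p + m + \ell)$ where $\ell$ is the sum of the lengths, and here $p = \#T_i^d$, $\ell = \#T_i^{d-1}$, $m \le \max(\#T_i^{d-1}, \#T_i^d)$.

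Next I would go level by level. Fix $d$. Step~3 (and its repeat in Step~4) scans $L_1$ and $L_2$, doing $O(1)$ work per node of level $d-1$ to append one component to the appropriate parent's tuple; this is $O(\#T_1^{d-1} + \#T_2^{d-1})$, and crucially the non-decreasing order of $L_i$ guarantees each parent's tuple is built already sorted, so no per-tuple sorting is needed. Step~5 is the radix sort, costing $O(\#T_1^d + \#T_2^d + \#T_1^{d-1} + \#T_2^{d-1})$ by the sub-claim. Step~6 compares $S_1'$ and $S_2'$ componentwise and relabels — again linear in the same quantities — and rebuilds $L_1,L_2$ (including prepending the leaves at level $d$) in the same bound. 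So the total work at level $d$ is $O(\#T_1^d + \#T_2^d + \#T_1^{d-1} + \#T_2^{d-1})$.

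Finally I would sum over $d$ from $0$ to $\depth(T_1)$. Since $\sum_d \#T_i^d = \#T_i = n$ and $\sum_d \#T_i^{d-1} \le n$ as well (each node counted at most once as a ``level $d-1$'' contributor), the grand total is $O(n)$. I would also note where Assumption~\ref{assumption} enters: it ensures that each integer label and each tuple component fits in $O(1)$ machine words, so that the ``$O(1)$ per component'' accounting in Steps~3--6 and the radix-sort bucket arithmetic are legitimate; without it one would carry a $\log n$ factor.

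The main obstacle is the radix-sort sub-claim: one must be precise that sorting \emph{variable-length} tuples lexicographically in linear total time is genuinely what \cite[Algorithm~3.2]{Aho1974} delivers, and in particular that the bound depends on the sum of tuple lengths plus the alphabet size, not on the product of the tuple count and the maximum length. The delicate point is handling tuples of differing lengths — the standard trick is to sort the tuples first by length, then process positions from the rightmost down, at each stage only touching tuples long enough to have that position, and distributing into $m+1$ buckets; verifying that the per-stage cost telescopes to $O(\ell + m)$ overall (rather than $O(m)$ per stage, which would be too expensive when the maximum length is large) is the crux, and is exactly the content one is importing from \cite{Aho1974}.
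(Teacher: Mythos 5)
Your proposal is correct and follows essentially the same route as the paper, whose proof of this theorem simply defers to \cite[Example~3.2]{Aho1974} for the level-by-level accounting and to \cite[Algorithm~3.2]{Aho1974} for the linear-time lexicographic sort of variable-length tuples --- exactly the sub-claim you isolate as the crux. You merely spell out the details the paper delegates to that reference, including the correct observation that Assumption~\ref{assumption} is what keeps each label in $O(1)$ machine words and hence keeps the sort genuinely linear.
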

\begin{proof}
See the proofs in \cite[Example~3.2]{Aho1974} for the whole algorithm and especially \cite[Algorithm~3.2]{Aho1974} for sorting lists $S_1$ and $S_2$ in Step~$5$. As stated above, Step~$5$ has complexity $O\left(\sum_{u\in T^d} \deg(u) + \#\lbrace c(u) : u\in T^d\rbrace\right)$. Noticing that $\sum_{u\in T^d} \deg(u) = \#T^{d-1}$ and that $\#\lbrace c(u) : u\in T^d\rbrace \leq \#T^d$, and summing over all levels indeed yields a linear complexity for all those sorts.
\end{proof}

\begin{remark}\label{rmk:size_of_trees}
 A point (very) briefly addressed by the authors of AHU algorithm specifies that the maximum integer $m$ used in Step~$5$ must be not ``too large'' \cite[Section 3.2, p.77]{Aho1974}. Indeed, the sorting algorithm works if the integers can actually be considered as integers, and not as sequences of $0$'s and $1$'s, as pointed out by Radford and Campbell \cite{campbell1991tree} -- in which they show that there are large trees for which the algorithms runs in $O(n\log n)$.

\medskip
\noindent
 How large are we talking? For the integers to \textbf{not} fit on one word of memory, we must assume that $m > 2^k$ with a $k$-bit machine. The smallest tree $T$ with $m=2^k$ has roughly $k2^k$ nodes (see Appendix~\ref{app:proof}). With $k=64$, this would imply trees of size $\approx 10^{21}$. For most practical applications, this is unlikely to be a problem.
\end{remark}

Without any additional context for interpreting what the algorithm does, perhaps the reader will agree with this comment, arguing that the formulation of the algorithm is

\begin{quote}
\textit{utterly opaque. Even on second or third reading. When an algorithm is written it should be
clear, it should persuade, and it should lend itself to analysis.}
\flushright{--- Douglas M. Campbell and David Radford \cite{campbell1991tree}}
\end{quote}

 In Campbell and Radford view, the formulation of the algorithm is detrimental to understanding it, analyse it, and implement it. It is true that the theoretical contribution of the AHU algorithm is indisputable, since it establishes that the tree isomorphism problem is linear. On the other hand, and to support Campbell and Radford's point of view, AHU would benefit from a formulation that is simpler to understand and implement. From a pedagogical point of view, it is likely that AHU is one of the first algorithms that people might want to study or implement when they learn about tree graph theory. AHU algorithm is also used by more advanced algorithms, such as the compression of trees into directed acyclic graphs (DAG) -- see Section~\ref{ss:global_classes}; or routinely for the construction of marked tree isomorphisms in \cite{ingels2021isomorphic,azais2023detection}.

\paragraph{Aim of the paper} In \cite{campbell1991tree}, Campbell and Radford provide a very clear, step-by-step exposition of the intuitions that lead to the AHU algorithm, and they even provide an algorithm similar to AHU that associates bitstrings to nodes instead of integers -- with $O(n\log n)$ time complexity. In this paper, we introduce yet another formulation for the AHU algorithm. This formulation assigns integers to the nodes, as does AHU and unlike Campbell and Radford's version. Several possible implementations of our approach are studied, both from a theoretical and a practical point of view. In addition to clarifying the intuition behind the original algorithm, our variants are straightforward to implement -- at the cost, however, of worse complexity. With respect to Remark~\ref{rmk:size_of_trees}, when used with trees of reasonable size in line with common use cases, they nonetheless perform better in practice. The outline of the paper is as follows:

\begin{itemize}
    \item Section~\ref{sec:revisiting} introduces our intuition for the AHU algorithm, in three variants: two using list sorts, and one using multiplication of lists of primes instead. We also present an adaptation of AHU for the compression of trees into directed acyclic graphs (DAG), also in three variants.
    \item Section~\ref{sec:num} tests these algorithms on simulated data of reasonable size, in competition with the original algorithm whenever possible (i.e. excluding DAG compression). 
\end{itemize}

Although the theoretical complexities of the algorithms presented here exceed the linear complexity of the original algorithm, we show that in practice, with the exception of one of the sorting variants, the others are competitive with the original. In particular, the variant using prime number multiplication is competitive with the best variant using sorts, even though it also has to generate the primes on the fly in addition to multiplying them.

Finally, in Appendix~\ref{app:proof}, we study a very specific class of trees, which allows us to prove results -- namely Lemmas~\ref{width_bound_pigeonhole} and~\ref{width_bound} -- relating the size of trees to the number of distinct integers needed to assign classes level by level in the AHU algorithm (whatever its variant). To the best of our knowledge, this matter has never been addressed before.

\section{Revisiting AHU algorithm}\label{sec:revisiting}

In this section, we present variants of the original AHU algorithm. First, Section~\ref{ss:intuition} provides a new intuition of the algorithm, in the form of a colouring process. We propose two variant implementations, each using a different sorting algorithm. In Section~\ref{ss:primes}, we replace the sorting step by the multiplication of a list of primes, leading to a new variant of AHU. Finally, in Section~\ref{ss:global_classes}, we focus on the compression of trees into directed acyclic graphs (DAGs), which can be achieved via a simple modification of our version of AHU -- declined in three implementations: two with sorting, one with prime multiplication, as before.

\subsection{An intuition for AHU and two variants}\label{ss:intuition}

As already stated, the interested reader can found in \cite{campbell1991tree} a step-by-step explaination of the concepts at works behind AHU algorithm. Here, we introduce another intuition for the AHU algorithm, presented as a colouring process, thus making the connection with Weisfeiler-Lehman algorithms for graph isomorphism already mentioned.

The core idea behind AHU algorithm is to provide each node in trees $T_1$ and $T_2$ a canonical representative of its equivalence class for $\simeq$, thus containing all the information about its descendants.

The nodes of both trees are simultaneously browsed in ascending levels. Suppose that each node $u$ of level $d-1$ has been assigned a colour $c(u)$, representing its equivalence class for the relation $\simeq$. Each node $u$ of level $d$ is associated with a multiset $\multiset(u)=\lbrace\!\lbrace c(v) : v\in \children(u)\rbrace\!\rbrace$ -- if $u$ is a leaf, this multiset is denoted $\emptyset$. Each distinct multiset is given a colour, which is assigned to the corresponding nodes. An illustration is provided in Figure~\ref{fig:AHU_principle}. In the end, the trees are isomorphic if and only if their roots receive the same colour. Moreover, after processing level $d$, if the multiset of colours assigned to the nodes of level $d$ differs from one tree to the other, we can immediately conclude that the trees are not isomorphic.

\begin{wrapfigure}[17]{R}{0.55\textwidth}
\vspace{-\baselineskip}
\centering
\def\xscale{0.7}
\def\yscale{1}
\def\nodescale{1}
\begin{tikzpicture}[xscale=\xscale,yscale=\yscale]
\tikzstyle{fleche}=[-,>=latex]
\tikzstyle{noeud}=[draw,circle,fill=black,scale=\nodescale]

\draw[dashed,lightgray,ultra thick] (-3,-2.75)--(9,-2.75);
\draw[dashed,lightgray,ultra thick] (-3,-1.75)--(9,-1.75);
\draw[dashed,lightgray,ultra thick] (-3,-0.25)--(9,-0.25);
\draw (3,0.5)--(9,0.5);

\node at (3.5,0.75) {Level};
\node at (3.5,0) {$3$};
\node at (3.5,-1) {$2$};
\node at (3.5,-2) {$1$};
\node at (3.5,-3) {$0$};

\node at (6,0.75) {Multiset};
\node at (6,-3) {$\emptyset$};
\node at (6,-2.5) {$\emptyset$};
\node at (6,-2) {$\lbrace\!\lbrace \node[lgreen] \rbrace\!\rbrace$};
\node at (6,-1.5) {$\emptyset$};
\node at (6,-1) {$\lbrace\!\lbrace \node[lgreen] \rbrace\!\rbrace$};
\node at (6,-0.5) {$\lbrace \!\lbrace\node[lblue],\node[lgreen]\rbrace\!\rbrace$};
\node at (6,0) {$\lbrace\!\lbrace \node[lred],\node[lred],\node[lblue],\node[lgreen]\rbrace\!\rbrace$};

\node at (8,0.75) {Colour};
\node at (8,-3) {$\node[lgreen]$};
\node at (8,-2.5) {$\node[lgreen]$};
\node at (8,-2) {$\node[lblue]$};
\node at (8,-1.5) {$\node[lgreen]$};
\node at (8,-1) {$\node[lblue]$};
\node at (8,-0.5) {$\node[lred]$};
\node at (8,0) {$\phantom{^\star}\node[lorange]^\star$};

\def\x{-2.5};
\def\y{-1};

\node[noeud,fill=lorange] (1) at (2.25+\x,1+\y) {};
\node[noeud,fill=lgreen] (2) at (0+\x,0+\y) {};
\node[noeud,fill=lblue] (3) at (3+\x,0+\y) {};
\node[noeud,fill=lred] (4) at (1.5+\x,0+\y) {};
\node[noeud,fill=lred] (5) at (4.5+\x,0+\y) {};
\node[noeud,fill=lgreen] (6) at (3+\x,-1+\y) {};
\node[noeud,fill=lblue] (7) at (1+\x,-1+\y) {};
\node[noeud,fill=lgreen] (8) at (2+\x,-1+\y) {};
\node[noeud,fill=lgreen] (9) at (4+\x,-1+\y) {};
\node[noeud,fill=lblue] (10) at (5+\x,-1+\y) {};
\node[noeud,fill=lgreen] (11) at (1+\x,-2+\y) {};
\node[noeud,fill=lgreen] (12) at (5+\x,-2+\y) {};

\draw[fleche] (1)--(2);
\draw[fleche] (1)--(3);
\draw[fleche] (1)--(4);
\draw[fleche] (1)--(5);
\draw[fleche] (3)--(6);
\draw[fleche] (4)--(7);
\draw[fleche] (4)--(8);
\draw[fleche] (5)--(9);
\draw[fleche] (5)--(10);
\draw[fleche] (7)--(11);
\draw[fleche] (10)--(12);

\end{tikzpicture}
\captionof{figure}{Assigning colours to nodes in AHU algorithm. $\star$:~We could have used colour \node[lgreen] because colours are assigned level by level and not globally. We have chosen to use another colour for the sake of clarity. In this example, the colours correspond exactly to the equivalence classes of the nodes.}
    \label{fig:AHU_principle}

\end{wrapfigure}
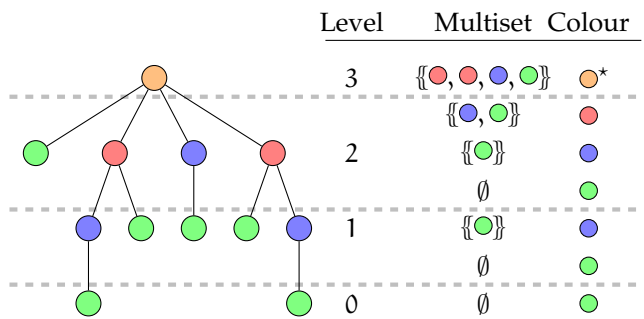


Consider the number of colours required by any version of AHU algorithm; this number is given by
\begin{equation}\label{eq:width}
\max_{d\in [\![0,\depth(T)]\!]} \#\lbrace c(u): u\in T^d\rbrace.
\end{equation} 
We call it the \emph{width} of $T$ and denote it by $\width(T)$.

In practice, colours are represented by integers. To associate different integers with distinct multisets, we need to keep track of which ones we have already encountered. In order to check in constant time whether a multiset has already been seen (which is the case in the original algorithm: as the tuples are sorted in Step~$5$, it is enough to compare a tuple with its predecessor in the list $S'_i$ to find out whether they are different or not), we need a perfect hash function that works on multisets. Obviously, this is a very strong assumption. Hash functions for multisets do exist -- see for instance \cite{clarke2003incremental,maitin2017elliptic} -- but they involve advanced concepts, which would make implementation difficult for non-specialists. For the sake of this article, let us assume that we do not have access to such methods. Since we will be focusing on Python applications later on, we assume that the Python dictionary structure can be seen as a perfect hash table; it can hash integers, strings or tuples. 

To get around multisets, a simple solution is to see them as lists, which we sort before hashing them as tuples. This approach, in particular, is used in Weisfeiler-Lehman algorithms, where the same problem arises -- see, for example, \cite[Algorithm~3.1]{kiefer2020power}. The pseudocode for AHU as presented in this section, using prior sorting of multisets, is presented in Algorithm~\ref{algo:ideal_ahu}.

\begin{minipage}[c]{0.4\textwidth}
Note that to compare the multisets of integers associated at current level $d$, on line~$14$, it is not necessary to hash but simply to sort the two lists and compare them term by term. This can be accomplished via pigeonhole sort \cite{black1998dictionary}. Remember that pigeonhole sorting a list of $k$ integers within the range $0$ to $m-1$ is done in $O(k+m)$. Here, since colours are attributed at each level, $m\leq \#T_i^d$; hence a complexity of $O(\#T_i^d)$ for this step.

\medskip

The overall complexity of Algorithm~\ref{algo:ideal_ahu} depends on the sorting algorithm used in line~$9$ to sort $\multiset(u)$. It may be tempting to reuse the pigeonhole sort, already used in the algorithm, or to use a comparison sorting algorithm, such as timsort, Python's native algorithm \cite{auger2018worst}. We assume that $T_1=T_2=T$ -- this is the worst case, since, if $T_1\not\simeq T_2$, we do not visit all the levels.

\end{minipage}\hfill
\begin{minipage}[c]{0.55\textwidth}

\begin{algorithm}[H]
\caption{\textsc{TreeIsomorphism}}
\label{algo:ideal_ahu}
\KwIn{$T_1,T_2$}
\KwOut{$\top$ if and only if $T_1\simeq T_2$}
\SetKw{KwFr}{from}
\SetKw{KwAnd}{and}
\eIf{$\depth(T_1)\neq\depth(T_2)$}{\Return $\bot$}{
\For{$d$ \KwFr $0$ \KwTo $\depth(T_1)$}{
$k \gets 0$\;
Let $f : \emptyset \mapsto 0$\;
\For{$i\in\lbrace 1,2\rbrace$ \KwAnd $u\in T_i^d$}{
$\multiset(u)\gets (c(v) : v\in \children(u))$\;
Sort $\multiset(u)$\;
    \If{$f(\multiset(u))$ \emph{is not defined}}{
        $k\gets k+1$\;
        Define $f(\multiset(u)) = k$\;
    }
 $c(u) \gets f(\multiset(u))$\;
}
\If{$\lbrace\!\lbrace c(u) : u\in T_1^d\rbrace\!\rbrace \neq \lbrace\!\lbrace c(u) : u\in T_2^d\rbrace\!\rbrace $}{\Return $\bot$}
}
\Return $\top$}
\end{algorithm}
\end{minipage}

\begin{proposition}
Algorithm~\ref{algo:ideal_ahu} runs
\begin{enumerate}[label=(\roman*)]
    \item in $O\left( \displaystyle\frac{\#T^2}{\log\# T}\right)$ using pigeonhole sort;
    \item in $O(\#T\log\deg(T))$ using timsort.
\end{enumerate}
\end{proposition}
\begin{proof}
Fix a level $d$ and a node $u\in T_i^d$. Building $\multiset(u)$ requires $O(\deg(u))$. Then, sorting $\multiset(u)$ depends on the algorithm used.
\begin{enumerate}[label=(\roman*)]
    \item Pigeonhole sort is done in $O(\deg(u) + \max(\multiset(u)))$.  Since the colours are attributed from $0$ to $\width(T)-1$ -- recall $\width(T)$ from \eqref{eq:width}, we have $\max(\multiset(u))\leq \width(T)$ and therefore a complexity of $O(\deg(u)+\width(T))$.
    \item The worst-case complexity of timsort is $O(\deg(u) \log\deg(u))$ \cite{auger2018worst}.
\end{enumerate}

Notice that $\sum_{u\in T_i^d }\deg(u) = \#T_i^{d-1}$. Recalling that line~$14$ is processed in $O(\#T^d_i)$ via pigeonhole sort, the complexity of treating level $d$ is $O(\#T_i^d \width(T))$ for case (i), and $O(\#T_i^d \log\deg(T))$ for case (ii) -- using $\log\deg(u)\leq \log\deg(T)$. Summing over $d$ leads to the result for (ii), and to $O(\#T\width(T))$ for (i).

The results holds in case (i) by virtue of the following lemma, whose proof can be found in Appendix~\ref{app:proof}.\end{proof}

\begin{lemma}\label{width_bound_pigeonhole}
For any tree $T$, $\width(T) = O\left(\displaystyle\frac{\#T}{\log\#T}\right)$.
\end{lemma}

Neither version of Algorithm~\ref{algo:ideal_ahu} is linear; we will see later in Section~\ref{sec:num} how they behave in practice. In the next section, we will consider another approach, which does not resort to sorting, but instead replaces multiset hashing with prime number multiplication.

\begin{remark}\label{rmk:ordered_labelled}
It should be noted that Algorithm~\ref{algo:ideal_ahu} can be straightforwardly adapted to handle ordered and/or labelled trees (where each node carries a label). For ordered trees, it suffices to not sort $\multiset(u)$. For labelled trees, we have to assume that labels can be hashed. We replace $\multiset(u)$ by the tuple $(\text{label}(u),\multiset(u))$ and consider two tuples to be equal if both the label and the multiset are identical. Note that another way of handling labels exists, requiring not the equality of labels but rather the respect of label equivalence classes. This variant is known as marked tree isomorphism \cite{booth1979problems} and proved to be as hard as graph isomorphism -- for which no polynomial algorithm is known, even though in practice very efficient algorithms exist; see \cite{mckay2014practical,babai2016graph} for instance. Marked tree isomorphism is far beyond the scope of this article; but we refer the interested reader to \cite{azais2023detection,ingels2021isomorphic}.

\medskip
\noindent
The authors of the AHU algorithm have provided in \cite{Aho1974} a way of adapting their method to labeled trees - but only with labels that can be totally ordered. It suffices to add of label of node $u$ as the first element of the tuple associated to it, before the lexicographical sort of Step~$5$. While not provided in \cite{Aho1974}, their algorithm can also be modified to account for ordered trees. In Step~$3$, the tuple associated to node $u$ at level $d$ is instead calculated as the tuple of integers associated with its children, in order. The list $L_i$ is not longer necessary.
\end{remark}
\subsection{AHU with primes}\label{ss:primes}

In Algorithm~\ref{algo:ideal_ahu}, we need to associate a unique integer $f(\multiset(u))$ to each distinct multiset $\multiset(u)$ of integers encountered. There is a particularly simple and fundamental example where integers are associated with multisets: prime factorization. Indeed, via the fundamental theorem of arithmetic, there is a bijection between integers and multisets of primes. For example, $360 = 2^3\cdot 3^2 \cdot 5$ is associated to the multiset $\lbrace\!\lbrace 2,2,2,3,3,5\rbrace\!\rbrace$.

\begin{minipage}[c]{0.4\textwidth}
 Note that this bijection is well known \cite{blizard1989multiset}, and has already been successfully exploited in the literature for prime decomposition, but also usual operations such as product, division, gcd and lcd of numbers \cite{tarau2014towards}. To the best of our knowledge, this link has never been exploited to replace multiset hashing, a fortiori in the context of graph isomorphism algorithms -- such as Weisfeiler-Lehman, or AHU for trees. Note, however, that this approach has been used in the context of evaluating poker hands \cite{suffecool2005cactus}, where prime multiplication has been preferred to sorting cards by value in order to get a unique identifier for each distinct possible hand.

\medskip

Since the previous versions of AHU we presented (both the original and our variants) sort lists of integers, the main challenge of this substitution concerns the potential additional complexity of multiplying lists of primes compared to sorting lists of integers.

\medskip

\end{minipage}\hfill
\begin{minipage}[c]{0.55\textwidth}
\vspace{-\baselineskip}
\begin{algorithm}[H]
\caption{\textsc{TreeIsomorphismWithPrimes}}
\label{algo:AHU_primes}
\KwIn{$T_1,T_2$}
\KwOut{$\top$ if and only if $T_1\simeq T_2$}
\SetKw{KwFr}{from}
\SetKw{KwAnd}{and}
\eIf{$\depth(T_1)\neq\depth(T_2)$}{\Return $\bot$}{
$P \gets [2,3,5,7,11,13]$ and $N_{\text{sieve}}\gets 16$\;
\For{$d$ \KwFr $0$ \KwTo $\depth(T_1)$}{
Let $f: 1\mapsto 2$\;
$p\gets 2$\;
\For{$i\in\lbrace 1,2\rbrace$ \KwAnd $u\in T_i^d$}{
$N(u)\gets \displaystyle\prod_{v\in\children(u)} c(v)$\;
\If{$f(N(u))$ \emph{is not defined}}{
$N_{\text{sieve}},P,p\gets \textsc{NextPrime}(N_{\text{sieve}},P,p)$\;
Define $f(N(u))=p$\;
}
$c(u)\gets f(N(u))$\;
}
\If{$\lbrace\!\lbrace c(u) : u\in T_1^d\rbrace\!\rbrace \neq \lbrace\!\lbrace c(u) : u\in T_2^d\rbrace\!\rbrace $}{\Return $\bot$}
}
\Return $\top$}
\end{algorithm}
\end{minipage}

Suppose that each node $u$ at level $d$ has received a prime number $c(u)$, assuming that all nodes at that level and of the same class of equivalence have received the same number. Then, to a node $u$ at level $d$, instead of associating the multiset $\multiset(u)=\lbrace\!\lbrace c(v) : v\in  \children(u)\rbrace\!\rbrace$, we associate the number $N(u) = \prod_{v\in \children(u)} c(v)$. The nodes of level $d$ are then renumbered with prime numbers  -- where each distinct number $N(u)$ gets a distinct prime. The fundamental theorem of arithmetic ensures that two identical multisets $\multiset(\cdot)$ receive the same number $N(\cdot)$. The pseudocode for this new version of AHU is presented in Algorithm~\ref{algo:AHU_primes}.





The subroutine \textsc{NextPrime}, introduced in Algorithm~\ref{algo:next_prime}, returns the next prime not already used at the current level; if there is no unassigned prime in the current prime list $P$, then new primes are generated using a segmented version of the sieve of Eratosthenes.



\begin{algorithm}[h!]
\caption{\textsc{NextPrime}}
\label{algo:next_prime}
\KwIn{$N_{\text{sieve}}$, the largest number for which the sieve of Eratosthenes has already been performed, the list $P$ of primes $\leq N_{\text{sieve}}$ in ascending order, with $\length(P)\geq 6$, and a prime $p\in P$}
\If{$p$ \emph{is the last of element of} $P$}{
$N_{\text{sieve}},P \gets \textsc{SieveOfEratosthenes}(N_{\text{sieve}},P)$\; 
\tcc{At least one new prime has been added to $P$}
}
Let $p'$ be the next prime after $p$ in $P$\;
\Return $N_{\text{sieve}},P,p'$
\end{algorithm}

Let us denote $p_n$ the $n$-th prime number. There are well known bounds on the value of $p_n$ \cite{dusart1999k,rosser1941explicit} -- with $\ln$ denoting the natural logarithm and $n\geq 6$:
\begin{equation}\label{primes_bound}
n (\ln n + \ln\ln n -1)  < p_n < n (\ln n + \ln \ln n).
\end{equation}

Suppose we have the list of all primes $P\leq N_{\text{sieve}}$, where $N_{\text{sieve}}$ is the largest integer sieved so far. With $\#P=n-1$, to generate $p_n$, we simply resume the sieve up to the integer $\lceil n(\ln n+\ln\ln n)\rceil$, starting from $\lfloor n(\ln n + \ln\ln n-1)\rfloor$ or $N_{\text{sieve}}+1$, whichever is greater -- to make sure there is no overlap between two consecutive segments of the sieve. With this precaution in mind, the total complexity of the segmented sieve is the same as if we had directly performed the sieve in one go \cite{bays1977segmented}; i.e., $O(N\log\log N)$ for a sieve performed up to integer $N$. Therefore, to generate the first $n$ prime numbers, according to \eqref{primes_bound}, we have $N= n(\ln n + \ln\ln n)=O(n\log n)$ and the final complexity of the sieve can be evaluated as $O(n \log n \log\log n)$. See \cite{o2009genuine} for practical considerations on the implementation of the segmented sieve of Eratosthenes.

\begin{remark}
Note that other sieve algorithms exist, with better complexities -- such as Atkin sieve \cite{atkin2004prime} or the wheel sieve \cite{pritchard1987linear}; the sieve of Eratosthenes has the merit of being the simplest to implement and sufficient for our needs. Also, a better asymptotic complexity but with a worse constant can be counterproductive for producing small primes -- which is rather our case since we generate the primes in order.
\end{remark}

We now analyse the complexity of Algorithm~\ref{algo:AHU_primes}, assuming  that $T_1=T_2=T$. Following the previous discussion, we can consider separately the complexity for generating the primes numbers.

\begin{proposition}\label{prop:primes_generation}
Generating the primes required for Algorithm~\ref{algo:AHU_primes} can be done in $O(\#T\log\log T)$.
\end{proposition}
\begin{proof}
To generate the first $n$ primes, the sieve must be carried out up to the integer $n\cdot(\ln n +\ln\ln n)$, for total complexity $O(n\log n\log\log n)$. As defined in \eqref{eq:width}, the number of primes needed by Algorithm~\ref{algo:AHU_primes} is equal to $\width(T)$. We result immediately follows by virtue of the upcoming lemma, whose proof can be found in Appendix~\ref{app:proof}.
\end{proof}
\begin{lemma}\label{width_bound}
For any tree $T$, $\width(T)\ln \width(T) = O(\#T)$.
\end{lemma}
Finally, we have the following result.

\begin{theorem}\label{th:complexity_primes}
Algorithm~\ref{algo:AHU_primes} runs in 
$O\left(\#T \log \#T \cdot M\big(\deg(T) \log\#T\big)\right)$ where $M(n)$ varies from $\log n$ to $n$ depending on the multiplication algorithm used.
\end{theorem}
\begin{proof}
The proof can be found in Appendix~\ref{app:proof_complexity_primes}.
\end{proof}

This new variant is no more linear than the two versions introduced in Section~\ref{ss:intuition}. We shall see in Section~\ref{sec:num} that, in practice, this version is competitive with its rivals.

\subsection{Computing classes globally for DAG compression}\label{ss:global_classes}
In AHU algorithm as it has been presented so far, the colours assigned to the nodes are assigned level by level, and therefore make it possible to determine the equivalence class of the node relative to the level at which it is located. It is legitimate to ask whether the colours can be associated globally, so that the colour of a node is an exact reflection of its equivalence class in the tree -- see Figure~\ref{fig:AHU_principle}. In this way, two nodes located at different levels but having the same colour induce isomorphic subtrees.

The need to assign equivalence classes globally notably arises when considering the (lossless) compression of trees into directed acyclic graphs (DAG). Trees can have many redundancies in their structure, and the aim of DAG compression is precisely to eliminate these redundancies. There are many applications of DAG compression, some of which are: the representation of trees in computer graphics \cite{sutherland1963sketchpad,hart1991efficient}, the simplification of queries on XML documents \cite{buneman2003path,frick2003query} or again the computation of convolution kernels \cite{aiolli2006fast,azais2020weight}.

\begin{minipage}[c]{0.45\textwidth}
The set of vertices of the DAG compression $\red(T)$ of a tree $T$  corresponds to the set of equivalence classes of the subtrees of $T$. For any vertices $a,b\in \red(T)$, the multiplicity of the arc $a\to~b$ corresponds to the number of children of class $b$ of a subtree of class $a$.  An example can be found in Figure~\ref{fig:dag_compression_example}. A more precise definition and an algorithm can be found in \cite{godin2009quantifying}. However, it should be noted that a simple adaptation of Algorithm~\ref{algo:ideal_ahu} can also be used to construct the DAG compression of a tree, namely Algorithm~\ref{algo:dag_compression}.
\end{minipage}\hfill
\begin{minipage}[c]{0.5\textwidth}
\centering
\def\xscale{0.7}
\def\yscale{1}
\def\nodescale{1}
\begin{tikzpicture}[xscale=\xscale,yscale=\yscale]
\tikzstyle{fleche}=[-,>=latex]
\tikzstyle{noeud}=[draw,circle,fill=black,scale=\nodescale]

\def\x{-2.5};
\def\y{-1};

\node[noeud,fill=lorange] (1) at (2.25+\x,1+\y) {};
\node[noeud,fill=lgreen] (2) at (0+\x,0+\y) {};
\node[noeud,fill=lblue] (3) at (3+\x,0+\y) {};
\node[noeud,fill=lred] (4) at (1.5+\x,0+\y) {};
\node[noeud,fill=lred] (5) at (4.5+\x,0+\y) {};
\node[noeud,fill=lgreen] (6) at (3+\x,-1+\y) {};
\node[noeud,fill=lblue] (7) at (1+\x,-1+\y) {};
\node[noeud,fill=lgreen] (8) at (2+\x,-1+\y) {};
\node[noeud,fill=lgreen] (9) at (4+\x,-1+\y) {};
\node[noeud,fill=lblue] (10) at (5+\x,-1+\y) {};
\node[noeud,fill=lgreen] (11) at (1+\x,-2+\y) {};
\node[noeud,fill=lgreen] (12) at (5+\x,-2+\y) {};

\draw[fleche] (1)--(2);
\draw[fleche] (1)--(3);
\draw[fleche] (1)--(4);
\draw[fleche] (1)--(5);
\draw[fleche] (3)--(6);
\draw[fleche] (4)--(7);
\draw[fleche] (4)--(8);
\draw[fleche] (5)--(9);
\draw[fleche] (5)--(10);
\draw[fleche] (7)--(11);
\draw[fleche] (10)--(12);

\def\x{6}
\node[noeud,fill=lorange] (a) at ({\x},{0}) {};
\node[noeud,fill=lred] (b) at ({\x},{-1.0}) {};
\node[noeud,fill=lblue] (c) at ({\x},{-2}) {};
\node[noeud,fill=lgreen] (d) at ({\x},{-3.0}) {};

\tikzstyle{arc}=[->-,>=latex]

\draw[arc] (a) to (b) ;
\draw[arc] (a) to[bend left=60] (b) ;
\draw[arc] (a) to[bend right=45] (c);
\draw[arc] (c) to (d);
\draw[arc] (b) to (c);
\draw[arc] (b) to[bend left=45] (d);
\draw[arc] (a) to [bend right=90](d);

\end{tikzpicture}
\captionof{figure}{A tree $T$ (left) and its DAG compression $\red(T)$ (right). Nodes are colored according to their equivalence class under $\simeq$.}
\label{fig:dag_compression_example}
\end{minipage}

\begin{minipage}[c]{0.39\textwidth}

\begin{remark}
In light of Remark~\ref{rmk:ordered_labelled}, the same adjustments can be applied to Algorithm~\ref{algo:dag_compression} to take into account ordered and/or labelled trees. In addition, for labelled trees, the vertex in $Q$ associated to the tuple $(\text{label}(u),\multiset(u))$ is labelled with $\text{label}(u)$; $f$ must be initialized as an empty mapping, and $Q$ as an empty graph. For ordered trees, the order of children of $\multiset(u)$ must be respected in $Q$.
\end{remark}

As with Algorithm~\ref{algo:ideal_ahu}, the complexity of Algorithm~\ref{algo:dag_compression} depends on the sorting algorithm used.

\begin{proposition}
Algorithm~\ref{algo:dag_compression} computes $\red(T)$
\begin{enumerate}[label=(\roman*)]
    \item in $O\left(\#T^2\right)$ using pigeonhole sort;
    \item in $O(\#T\log\deg(T))$ using timsort.
\end{enumerate}
\end{proposition}

\end{minipage}\hfill
\begin{minipage}[c]{0.56\textwidth}
\begin{algorithm}[H]
\caption{\textsc{DAGCompression}}
\label{algo:dag_compression}
\KwIn{$T$}
\KwOut{$\red(T)$}
\SetKw{KwFr}{from}
\SetKw{KwAnd}{and}
$k \gets 0$\;
Let $f : \emptyset \mapsto 0$\;
Let $Q$ be a graph with a unique vertex $0$\;
\For{$d$ \KwFr $0$ \KwTo $\depth(T)$}{
\For{$u\in T^d$}{
    $\multiset(u)\gets (c(v) : v\in\children(u))$\;
Sort $\multiset(u)$\;
    \If{$f(\multiset(u))$ \emph{is not defined}}{
        $k\gets k+1$\;
        Define $f(\multiset(u)) = k$\;
        Create a vertex $k$ in $Q$ with children $\multiset(u)$\;
    }
 $c(u) \gets f(\multiset(u))$\;
}
}
\Return $Q$
\end{algorithm} 
\end{minipage}

\begin{proof}
We assume that creating a new vertex can be done in $O(\#\multiset(u))=O(\deg(u)$. Then, note that since we can determine whether $f(\multiset(u)$ is defined in $O(1)$ at line~$8$, the complexity of constructing the DAG is bounded by $ O\left(\sum_{u\in T} \deg(u)\right) = O(\#T)$ (the worst case would be when all nodes of $T$ are of different equivalence class), and can be counted independently from other steps in the algorithm.

With timsort, the sorting complexity remains $O(\deg(u)\log\deg(u))$ and the global complexity is the same as for Algorithm~\ref{algo:ideal_ahu}. However, for pigeonhole sort, things are diffrent. By computing the classes globally, the number of colours needed is not $\width(T)$ but exactly $\#\red(T)$, that can be roughly bounded by $\#T$; therefore sorting $\multiset(u)$ via pigeonhole has complexity $O(\deg(u)+\#T)$ and leads to a global complexity of $O(\#T^2)$.
\end{proof}

The state-of-the-art complexity for computing $\red(T)$ is $O(\#T\log\deg(T))$\footnote{Actually, in \cite{godin2009quantifying}, the authors announce a complexity of $O(\#T\deg(T)\log\deg(T))$; but their proof does not exploit the fact that $\sum_{u\in T}\deg(u)=\#T-1$. Taking this into account, we can simplify their result and obtain exactly $O(\#T\log\deg(T))$.} \cite[Section~2.2.1]{godin2009quantifying}, so our approach is consistent if we use timsort.

Algorithm~\ref{algo:AHU_primes} can also be adapted so that the DAG is constructed using multiplication of primes. As stated above, the largest colour (thus prime) used in the algorithm is no longer $\width(T)$, but $\#\red(T)$. Using the same rough bound of $\#\red(T)\leq \#T$, the complexity of Algorithm~\ref{algo:AHU_primes} would be modified to $$O\big(\#T \log \#T \cdot \left(M\big(\deg(T) \log\#T\big)+\log\log\#T\right)\big),$$
the final complexity depending on the chosen multiplication algorithm and whether $M(\deg(T)\log\#T)$ outweighs or not $\log\log\#T$ -- recalling that $M(n)$ varies from $\log n$ to $n$.

\begin{remark}
Concerning the bound $\#\red(T)\leq \#T$, note that $\#\red(T)=\#T$ is achieved if and only if $T$ is a chain; in which case all nodes have at most one child, and only one prime is actually needed. We could refine the bound with $\#\red(T)\leq \#T - \#\leaves(T) +1$, since all leaves of $T$ have the same equivalence class. In general, if we choose a tree $T$ uniformly at random among trees with $n$ nodes, the expected value of $\red(T)$ is $$\sqrt{\frac{\ln 4}{\pi}} \frac{n}{\sqrt{\ln n}}\left(1+O\left(\frac{1}{\ln n}\right)\right) \quad \text{\cite[Theorem~29]{bousquet2015xml}}.$$

\noindent
The most compressible trees are called \emph{self-nested trees} and achieve $\#\red(T)=\depth(T)$ -- on that matter, see \cite{godin2009quantifying, azais2019nearest, azais2019approximation}.
\end{remark}

Finally, note that AHU as stated in the original paper \cite{Aho1974} cannot be extended to take into account this global assignment of colours.

\section{Numerical experiments}\label{sec:num}



Table~\ref{tab:recap} summarizes the different algorithms seen so far and whether or not they can be adapted to compute DAG compressions of trees. We implemented in Python those 7 different algorithms. All of them are written to be fully compatible with the library \verb=treex= \cite{azais2019treex}, dedicated to tree algorithms.

\begin{table}[h!]
{
\small
\centering
\begin{tabular}{c|c|c|c|c}
    & Original & AHU with & AHU with &  \multirow{2}{*}{AHU with primes}  \\
     & AHU &   pigeonhole sort&   timsort& \\
     \hline
            &&&&   \\[-1em]
Tree & \multirow{2}{*}{ $O(\#T)$ }  & \multirow{2}{*}{ $O\left(\displaystyle\frac{\#T^2}{\log\#T}\right)$} & \multirow{2}{*}{$O(\#T\log\deg(T))$  } & \multirow{2}{*}{$O(\#T\log\#T \cdot M\left(\deg(T) \log\#T\right)))$}\\
Isomorphism &&&&\\
       &&&&   \\[-1em]
DAG &\multirow{2}{*}{\xmark}  & \multirow{2}{*}{$O(\#T^2)$} & \multirow{2}{*}{$O(\#T\log\deg(T))$ } & \multirow{2}{*}{$O\big(\#T \log \#T \cdot \left(M\big(\deg(T) \log\#T\big)+\log\log\#T\right)\big)$}\\
Compression &&&&\\
\end{tabular}
}
    \caption{Complexities of the various algorithms studied in this paper.}
    \label{tab:recap}
\end{table}

Most of auxiliary functions used in those algorithms have been implemented as well, such as the variant of radix sort used by AHU \cite[Algorihm~3.2]{Aho1974}, pigeonhole sort or the prime variant of pigeonhole sort discussed in Appendix~\ref{app:proof_complexity_primes}. For the algorithm using primes, we have also implemented a variant that uses a pregenerated list of primes, large enough so that no additional sieving step is needed during execution -- in this way we intend to study the impact of multiplication compared with sorting. The divide and conquer recursive multiplication strategy discussed in Appendix~\ref{app:proof_complexity_primes} as well as the sieve of Eratosthenes have been also implemented. Note that our implementation of the segmented sieve of Eratosthenes ignores multiples of $2$ and $3$, thus making the sieve $6$ times faster. The noteworthy native Python functionalities we used are (i) the dictionary structures for hash tables, (ii) the \verb=sort= method for lists -- that use timsort algorithm \cite{auger2018worst} and (iii) the multiplication operator \verb=*= -- which use schoolbook multiplication for small integers, and Karatsuba for large ones. 

All experiments have been conducted on a HP Elite Notebook with 32 Go of RAM and Intel Core i7-1365U processor.












\subsection{Results for tree isomorphism}

We generated $100$ pairs of trees $(T_1,T_2)$ of size $n=10^i$ for each $i\in [\![1,6]\!]$, and for each case $T_1\simeq T_2$ or $T_1\not\simeq T_2$, following the procedure below:
\begin{itemize}
    \item for $T_1\simeq T_2$, we generate a random recursive tree \cite{zhang2015number} $T_1$ of size $n$, and generate $T_2$ as a copy of $T_1$;
    \item for $T_1\not\simeq T_2$, $T_1$ and $T_2$ are both random recursive trees of size $n$.
\end{itemize}

When processing a couple, we executed all five algorithms (original AHU, AHU with pigeonhole sort, AHU with timsort, AHU with primes and AHU with pregenerated primes) on that same couple, so that the results are comparable. The computation times we got are depicted in Figure~\ref{fig:isom_results}.

\begin{figure}[p]
\thisfloatpagestyle{empty}
    \centering
\begin{subfigure}[b]{\textwidth}
    \centering
\caption{$T_1\simeq T_2$}
\includegraphics[width=0.95\textwidth]{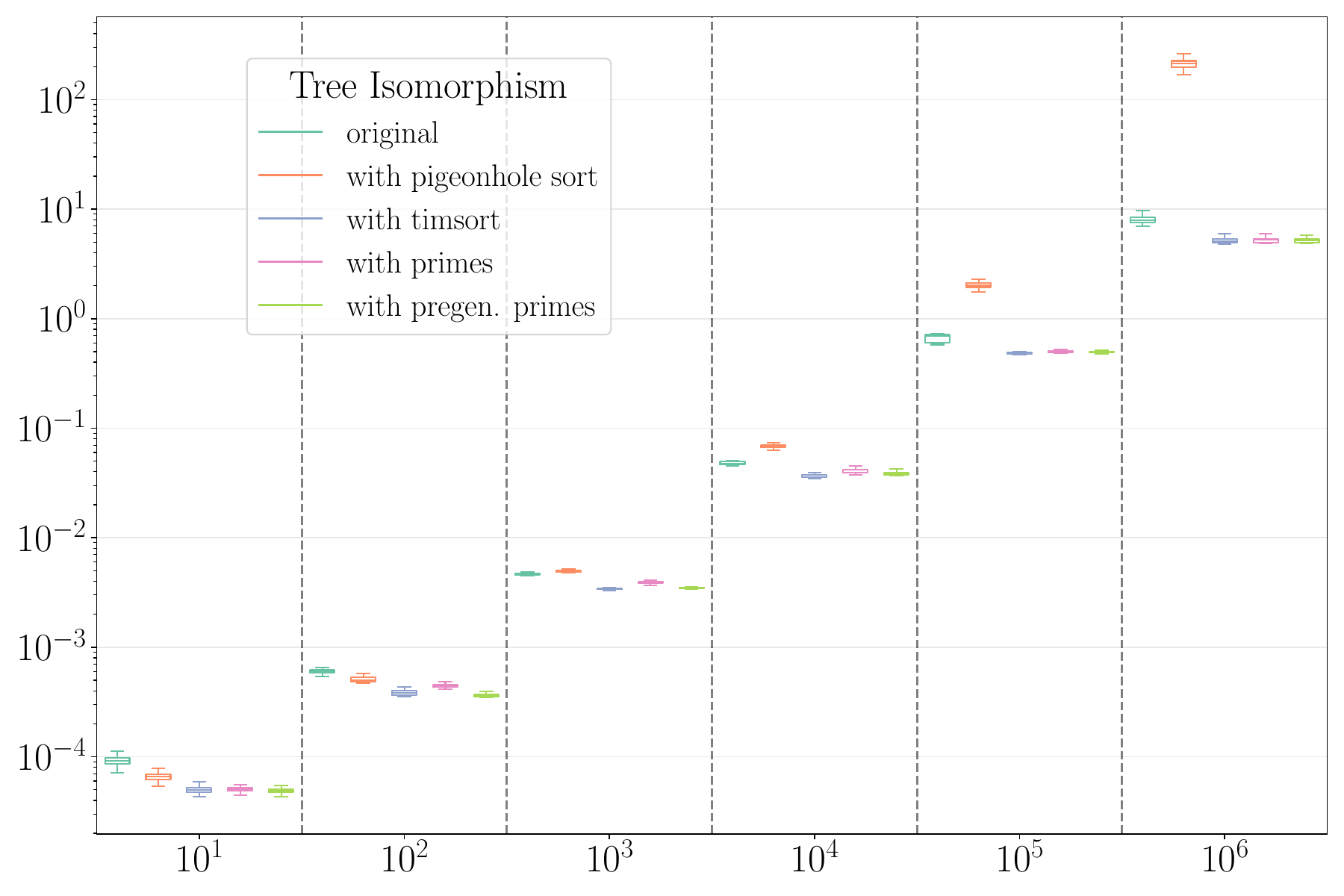}
\end{subfigure}

\begin{subfigure}[b]{\textwidth}
    \centering
\caption{$T_1\not\simeq T_2$}
\includegraphics[width=0.95\textwidth]{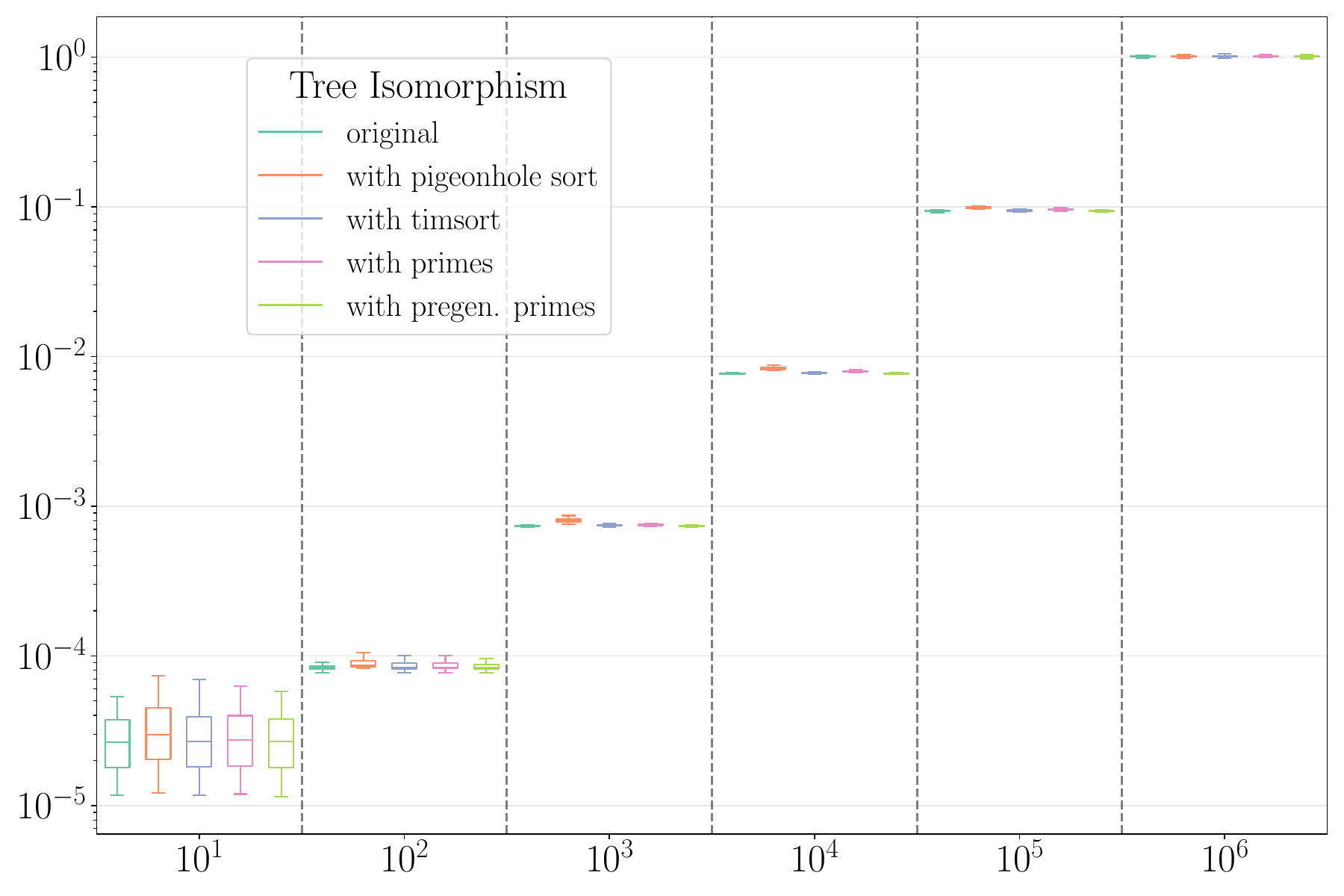}
\end{subfigure}
\caption{Computation times (in log scale) for tree isomorphism using different algorithms, according to the size of the trees, with 100 replicates for each size.}
    \label{fig:isom_results}
\end{figure}

As one might expect, in the case $T_1\simeq T_2$, the behaviour of AHU with pigeonhole is supralinear. However, this is not the case for the other variants, which are in fact faster than the original algorithm. This can be explained by the fact that AHU has a large constant, scanning each list several times during its execution. Furthermore, there doesn't seem to be any significant difference between AHU with timsort, with primes or with pregenerated primes (for the same tree, we found that timsort is actually slightly faster in most of the cases). It's not surprising that the two versions with prime numbers are similar: it has already been established that the complexity of generating the prime numbers is negligible compared with the other steps in the algorithm. However, the proximity between timsort on the one hand, and prime numbers on the other, suggests that at this scale it is just as fast to sort lists as it is to multiply them.

Whenever $T_1\not\simeq T_2$, all algorithms are roughly equally fast. This is likely due to the early stopping conditions: the distribution of random recursive trees generates trees that are too dissimilar for several levels to be visited during the execution.

\begin{remark}
 One could argue that $10^6$ is not very large, as an upper limit to our simulations. However, most of the real tree databases we are aware of already do not have trees of this size. Among the 8 datasets studied in \cite{azais2020weight}, the largest trees have a few thousand nodes; among the 5 studied in \cite{shin2014comprehensive}, a few hundred. If gigantic tree databases were to be built (for example, spanning trees from graphs with billions of nodes), it seems reasonable to imagine that they would be processed, in any case, with algorithms implemented in C, C++ or Rust rather than in Python (at the very least to be able to compute them efficiently, with the spanning trees example).
\end{remark}

\subsection{Results for DAG compression}

Here, we generated $100$ random recursive trees $T$ of size $n=10^i$ for each $i\in [\![1,6]\!]$. The results are provided in Figure~\ref{fig:dag_compression_results}. One can see that the pigeonhole variant seems to confirm a quadratic behaviour, while the timsort variant confirms its slight advantage over the versions with prime numbers. This time, a difference can be seen between the version that uses pregenerated primes and the one that does not (as might be expected given the theoretical complexity).

\begin{figure}[h!]

    \centering

\includegraphics[width=0.95\textwidth]{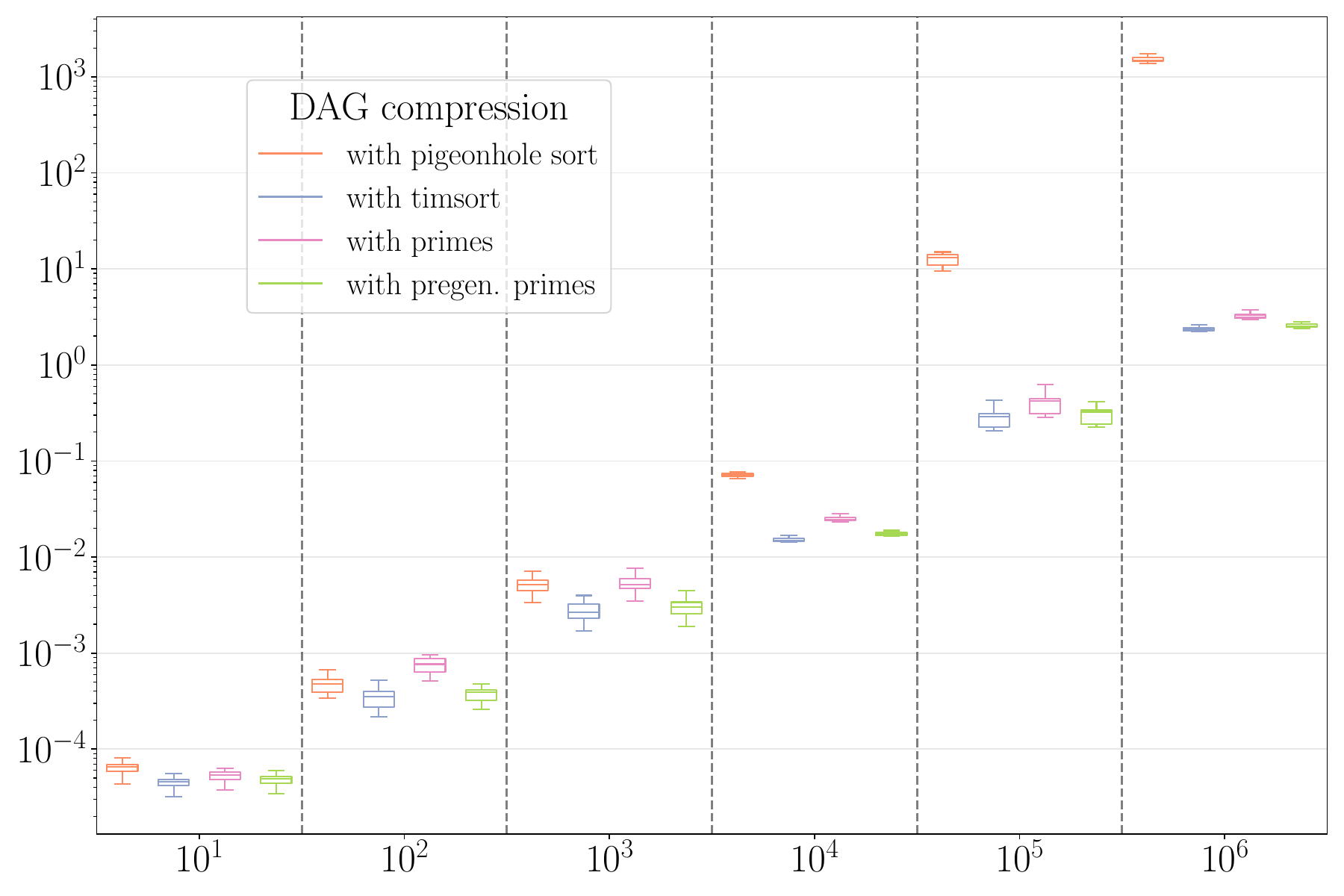}

\caption{Computation times (in log scale) for DAG compression using different algorithms, according to the size of the trees, with 100 replicates for each size.}
    \label{fig:dag_compression_results}
\end{figure}

\subsection*{Concluding remarks}

In this article, we have provided a new intuition for understanding the AHU algorithm, as well as several algorithmic variants that are straightforward to implement, albeit at the cost of increased complexity. However, we have shown that on trees of reasonable size, with a Python implementation, some of these variants were faster than the original algorithm. We have shown that a simple adaptation of our algorithms can also be used to calculate the DAG compression of trees.

Perhaps counter-intuitively, we have shown that in this context we can multiply lists of primes almost as quickly as we can sort lists of integers. However, if one had to pick just one variant, the one with timsort would probably be the simplest to implement and the most effective in practice -- bot for tree isomorphism and DAG compression. The version using prime numbers could possibly become competitive with timsort if the list multiplication and prime number generation operations were implemented in CPython (as is the case for timsort), but this is beyond the scope of this article.

One may also wonder whether our variant using prime numbers could be applied to other algorithms similar to AHU, such as (1-dimensional) Weisfeiler-Lehman algorithm for graph isomorphism. While this issue is outside the scope of this paper, and remains to be investigated, let us nonetheless mention two points that may prove challenging. First, the way Weisfeiler-Lehman operates can lead to processing as many colours as there are nodes in the graph, and therefore having to generate as many prime numbers -- which relates to the DAG compression case studied in this paper, for which our variant performed slightly worse than the others. Next, we would multiply lists whose size depends on the degree of the current node; in a dense or complete graph, this means lists whose size is comparable to the number of nodes in the graph. The complexity of performing these multiplications could prove far more expensive than for trees. Since Weisfeiler-Lehman can be implemented in $O((\#V+\#E) \log \#V)$ for a graph $G=(V,E)$, it remains to be investigated to which extent the additional complexities mentioned above exceeds that of the original algorithm. See \cite[Section~3.1]{kiefer2020power} and references therein for a more precise description of the Weisfeiler-Lehman algorithms.

\subsection*{Acknowledgements}
The author would like to thank Dr. Romain Azaïs and Dr. Jean Dupuy for their helpful suggestions on the first drafts of the article; as well as the anonymous reviewers who provided feedback on a previous version of this article.

\printbibliography

\appendix

\section{Proof of Lemma~\ref{width_bound_pigeonhole} and Lemma~\ref{width_bound}}\label{app:proof}

We conduct the proof of Lemma~\ref{width_bound_pigeonhole} and Lemma~\ref{width_bound} by first introducing a special tree, which for a given width has the smallest possible number of nodes, before observing how these two quantities are related.

\subsection{A special tree}\label{ss:special_tree}

\begin{minipage}[c]{0.5\textwidth}
Let $k\geq 1$ be a fixed integer. A tree $T$ such that $\width(T)=k$ can be obtained by placing $k$ trees $T_i$, $i\in[\![1,k]\!]$, under a common root, so that $T_i\not\simeq T_j$ for $i\neq j$. Note that this construction by no means encompasses all types of trees $T$ with $\width(T)=k$. On the other hand, by cleverly choosing the $T_i$'s, we can build a tree with the minimum number of nodes among all trees verifying $\width(T)=k$.

\medskip 

First, $T_1$ would be the tree with a unique node. Then, $T_2$ the only tree with two nodes. Then, $T_3$ and $T_4$ would be the two non-isomorphic trees with three nodes; $T_5$ to $T_8$ the four non-isomorphic trees with four nodes, and so on until we reach $T_k$. See Figure~\ref{fig:tree_ratio_max} for an example with $k=5$. It should be clear that this construction ensures that $\width(T)=k$ and $\#T$ is minimal. 
\end{minipage}\hfill
\begin{minipage}[c]{0.45\textwidth}
   \centering
\def\xscale{0.7}
\def\yscale{1}
\def\nodescale{0.7}
\begin{tikzpicture}[xscale=\xscale,yscale=\yscale]
\tikzstyle{fleche}=[-,>=latex]
\tikzstyle{noeud}=[draw,circle,fill=black,scale=\nodescale]

\node[noeud] (u) at (3,1) {};

\node[noeud] (u0) at (-0.5,0) {};

\node[noeud] (u1) at (1,0){};
\node[noeud] (u10) at (1,-1){};
\draw[fleche] (u1)--(u10);

\node[noeud] (u2) at (2.5,0){};
\node[noeud] (u20) at (2,-1){};
\node[noeud] (u21) at (3,-1){};
\draw[fleche] (u2)--(u20);
\draw[fleche] (u2)--(u21);

\node[noeud] (u3) at (4,0){};
\node[noeud] (u31) at (4,-1){};
\node[noeud] (u32) at (4,-2){};
\draw[fleche] (u3)--(u31);
\draw[fleche] (u31)--(u32);

\node[noeud] (u4) at (5.5,0){};
\node[noeud] (u41) at (5.5,-1){};
\node[noeud] (u42) at (5,-1){};
\node[noeud] (u43) at (6,-1){};
\draw[fleche] (u4)--(u41);
\draw[fleche] (u4)--(u42);
\draw[fleche] (u4)--(u43);

\draw[fleche] (u)--(u0);
\draw[fleche] (u)--(u1);
\draw[fleche] (u)--(u2);
\draw[fleche] (u)--(u3);
\draw[fleche] (u)--(u4);

\draw [decorate,
	decoration = {brace,mirror}] (-1,-2.5) --  (0,-2.5)node[pos=0.5,below=5pt] {$a_1$};

 \draw [decorate,
	decoration = {brace,mirror}] (0.5,-2.5) --  (1.5,-2.5)node[pos=0.5,below=5pt] {$a_2$};

  \draw [decorate,
	decoration = {brace,mirror}] (1.75,-2.5) --  (4.5,-2.5)node[pos=0.5,below=5pt] {$a_3$};

  \draw [decorate,
	decoration = {brace,mirror}] (4.75,-2.5) --  (6.5,-2.5)node[pos=0.5,below=5pt] {$k-b_3$};

\draw [decorate,
	decoration = {brace,mirror}] (-1,-3.25) --  (1.5,-3.25)node[pos=0.5,below=5pt] {$b_2$};
 
\draw [decorate,
	decoration = {brace,mirror}] (-1,-3.375) --  (4.5,-3.375)node[pos=0.5,below=5pt] {$b_3$};

\end{tikzpicture}
\captionof{figure}{The smallest tree so that $\width(T)=5$. We have $b_3<5\leq b_4$ and $\#T= 1 + 1\cdot a_1 + 2\cdot a_2 + 3\cdot a_3 + 4\cdot (5-b_3)=14$.}
    \label{fig:tree_ratio_max}
\end{minipage}

Following this construction, the total number of nodes in $T$, that we denote by $t_k$, is therefore closely related to the number of non-isomorphic trees and their cumulative sum. Let us denote $a_n$ the number of non-isomorphic trees of size $n$, and $b_n$ the number of non-isomorphic trees of size at most $n$ -- so that $b_n=\sum_{i=1}^n a_i$. Let $n$ be the integer so that $b_n < k \leq b_{n+1}$. All trees with size up to $n-1$ are used in the construction, as well as $k-b_n$ trees of size $n+1$ (no matter which ones).

\begin{minipage}[c]{0.5\textwidth}
 Therefore,
$$t_k = 1 + \sum_{i=1}^n i\cdot a_i + (n+1) (k-b_n).$$ Table~\ref{tab:trees} provides the first values for $a_n$, $b_n$ and $t_k$. Following the previous discussion, we have the following result.
\begin{lemma}\label{lemma:bound}
For any tree $T$, $\#T\geq t_{\width(T)}$.
\end{lemma}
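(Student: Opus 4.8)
The plan is to formalize the construction described in the text preceding the lemma, showing that the tree exhibited there genuinely is a minimizer of $\#T$ subject to $\width(T)=k$. So fix a tree $T$ with $\width(T)=k$ and write $k=\width(T)$ for brevity. By definition of width, there is some level $d^\star$ at which $\#\{c(u):u\in T^{d^\star}\}=k$; equivalently, among the nodes at level $d^\star$ there are (at least) $k$ pairwise non-isomorphic subtrees rooted at those nodes. Call these subtrees $R_1,\dots,R_k$; they are pairwise non-isomorphic, and each $R_i$ is a genuine subtree of $T$ consisting of a node of $T$ together with all its descendants, so the $R_i$ occupy pairwise disjoint sets of nodes of $T$. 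Hence $\#T \geq \sum_{i=1}^k \#R_i$ (there may be further nodes of $T$ — the root, other branches, ancestors of level $d^\star$ — but those only help the inequality).

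Next I would bound $\sum_{i=1}^k \#R_i$ from below by $t_k$. The key combinatorial fact is that if $S_1,\dots,S_k$ are $k$ pairwise non-isomorphic trees, then $\sum_{i=1}^k \#S_i$ is minimized by taking the $k$ smallest isomorphism classes of trees: all $a_1$ trees of size $1$, all $a_2$ of size $2$, \dots, all $a_{n-1}$ of size $n-1$, and $k-b_{n-1}$ of size $n$, where $n$ is chosen with $b_{n-1}<k\le b_n$ (the text writes this with its own index convention as $b_n<k\le b_{n+1}$; I will keep the excerpt's convention). This is an exchange/rearrangement argument: since there are only $a_j$ non-isomorphic trees of size $j$, any family of $k$ pairwise non-isomorphic trees must ``spill over'' to larger sizes at least as much as the greedy family does; replacing any tree in the family by an unused tree of the smallest available size never increases the total and preserves pairwise non-isomorphism. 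Formally, for each size $j$ let $m_j$ be the number of $S_i$ of size $j$; then $m_j\le a_j$ and $\sum_j m_j=k$, and $\sum_i \#S_i=\sum_j j\,m_j$ is minimized, under these constraints, exactly by the greedy choice, giving $\sum_i\#S_i\ge \sum_{i=1}^n i\cdot a_i+(n+1)(k-b_n)$. That lower bound is precisely $t_k-1$, and combined with the extra ``$+1$'' from the common root present in $t_k$ — wait, here one must be a little careful: $t_k = 1 + \sum_{i=1}^n i\cdot a_i+(n+1)(k-b_n)$ includes a root node on top of the $k$ subtrees. In our $T$, the $k$ non-isomorphic subtrees $R_1,\dots,R_k$ sit at level $d^\star$; since $\width(T)=k\ge 1$ the level $d^\star$ is well-defined, and if $d^\star<\depth(T)$ there is at least one node above them, while if $d^\star=\depth(T)$ then $d^\star$ is the root level and $T$ equals a single node only when $k=1$, in which case $t_1=2$ fails — so I need to double-check the $k=1$ base case against the table and the definition, adjusting the argument (for $k=1$, $\width(T)=1$ forces $\depth(T)=0$? no — a path has width $1$ and arbitrary depth, and indeed $t_1=2\le \#(\text{path})$). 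Resolving this boundary bookkeeping so that the ``$+1$'' root is always accounted for is the main obstacle.

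Concretely, the clean way around the boundary issue is: either $d^\star$ is not the root level, in which case $\#T\ge 1+\sum_i\#R_i\ge t_k$; or $d^\star$ is the root level, which forces $\depth(T)=0$, hence $T$ is a single node and $k=\width(T)=1$, but then $\width(T)=1$ also allows longer trees, and more importantly the genuinely minimal tree with $\width=1$ is the $2$-node path (a $1$-node tree has width $1$ too, giving $\#T=1<2=t_1$!). So the statement as written would be false for the single-node tree — meaning the intended reading must be that $\width$ is defined via the original AHU setup where levels run $0$ to $\depth(T)$ and the relevant count is taken over the whole traversal; for the one-node tree $\width=1$ and $\#T=1<t_1=2$. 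I would therefore either (a) note that $t_1$ should be $1$, or (b) observe the lemma is meant for the application where it is combined with Proposition~\ref{width_bound} and the off-by-one is absorbed, or (c) most likely: the construction's root is itself one of the counted nodes at some level, so the ``$+1$'' is the root of the whole tree $T$ which is always present and distinct from the $R_i$ unless $T$ is a single node. Assuming the paper's convention excludes or handles that degenerate case, the argument above goes through: pick the $k$ non-isomorphic subtrees at a level realizing $\width(T)$, apply the greedy rearrangement bound $\sum_i\#R_i\ge t_{\width(T)}-1$, add the root to get $\#T\ge t_{\width(T)}$. The combinatorial minimization (the exchange argument on $(m_j)$) is routine; the only real care needed is the precise indexing of $n$ via $b_n<k\le b_{n+1}$ and the treatment of the degenerate smallest cases.
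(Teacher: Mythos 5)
Your argument is correct and is essentially the rigorous version of what the paper only asserts: the paper exhibits the greedy construction and declares its minimality to be clear, whereas you supply the two missing ingredients — that a level realizing $\width(T)$ carries $k$ pairwise non-isomorphic, pairwise node-disjoint subtrees, and the exchange argument on the size profile $(m_j)$ with $m_j\le a_j$ showing their total size is at least $t_k-1$, the root contributing the final $+1$ whenever $k\ge 2$. Your boundary analysis also correctly uncovers that the lemma as literally stated fails for the single-node tree ($\#T=1<2=t_1$ while $\width(T)=1$), an edge case the paper silently ignores; this is harmless for the intended application, since Proposition~\ref{prop:k} and Proposition~\ref{width_bound} are asymptotic statements unaffected by one small exception. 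The only slip in your write-up is the passing claim that $d^\star$ being the root level forces $\depth(T)=0$ — it only forces $k=1$ (a path of any length has width $1$) — but you self-correct and the case split you actually use ($k\ge 2$ versus $k=1$) is sound.
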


\end{minipage}\hfill
\begin{minipage}[c]{0.45\textwidth}
    \centering
\begin{tabular}{c|cccccccc}
   $n$  & $1$ & $2$ & $3$ & $4$ & $5$ & $6$ & $7$ & $8$  \\
   \hline
    $a_n$ &  $1$& $1$& $2$& $4$& $9$& $20$& $48$&$ 115$ \\
    $b_n$ & $1$& $2$& $4$& $8$& $17$& $37$& $85$& $200$ \\
    \hline
    $t_n$ & $2$ & $4$& $7$&$10$ & $14$&$18$ &$22$ &$26$  \\
\end{tabular}
\captionof{table}{First values of $a_n$, $b_n$ and $t_n$. $a_n$ is sequence \href{https://oeis.org/A000081}{A000081} in OEIS, and $b_n$ sequence \href{https://oeis.org/A087803}{A087803}. See OEIS Foundation Inc. (2023), The On-Line Encyclopedia of Integer Sequences, Published electronically at \url{https://oeis.org}.}
    \label{tab:trees}
\end{minipage}

\subsection{Relationships between $k$ and $t_k$}

We require some preliminary results. We begin with the following lemma.

\begin{lemma}\label{lemma:technical}
Let $(u_n)_{n\in\mathbb{n}}$ be a sequence so that $u_n\underset{+\infty}{\sim} c \cdot d^n \cdot n^{-\alpha}$, with $c,\alpha\geq 0$ and $d> 1$. Then $\sum_{k=1}^n u_k \underset{+\infty}{\sim} c \cdot\frac{d}{d-1}\cdot d^n\cdot n^{-\alpha}$.
\end{lemma}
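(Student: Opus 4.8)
The plan is to reduce the claim to an asymptotic estimate for the purely geometric–polynomial sum $\sum_{k\le n} d^k k^{-\alpha}$, and then to establish that estimate by an elementary dominated-convergence argument applied to a reindexed finite sum. Throughout I would assume $c>0$; the case $c=0$ is degenerate, as then $u_n\underset{+\infty}{\sim}0$ and the conclusion is trivial (or vacuous).

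\textbf{Step 1 (summation of equivalents).} First I would invoke, and if needed reprove in two lines, the classical fact that if $u_n\underset{+\infty}{\sim}v_n$ with $v_n>0$ and $\sum_{k=1}^n v_k\to+\infty$, then $\sum_{k=1}^n u_k\underset{+\infty}{\sim}\sum_{k=1}^n v_k$: fix $\varepsilon>0$, pick $N$ with $(1-\varepsilon)v_k\le u_k\le(1+\varepsilon)v_k$ for all $k\ge N$, split $\sum_{k=1}^n u_k$ at $N$, divide by $\sum_{k=1}^n v_k$, and let the finitely many initial terms become negligible as $n\to\infty$; since $\varepsilon$ is arbitrary, the ratio tends to $1$. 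Applying this with $v_n=c\,d^n n^{-\alpha}$, which is positive and whose partial sums diverge because $d>1$, it remains to prove that
\[
\sum_{k=1}^n c\,d^k k^{-\alpha}\ \underset{+\infty}{\sim}\ c\cdot\frac{d}{d-1}\cdot d^n n^{-\alpha}.
\]

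\textbf{Step 2 (reindexing and splitting).} Dividing by $c\,d^n n^{-\alpha}$ and substituting $j=n-k$, the left-hand ratio becomes $S_n:=\sum_{j=0}^{n-1} d^{-j}\bigl(1-\tfrac{j}{n}\bigr)^{-\alpha}$, and the goal is $S_n\to\sum_{j=0}^{\infty}d^{-j}=\frac{d}{d-1}$. For each fixed $j$ the $j$-th summand tends to $d^{-j}$, so it suffices to control the tail uniformly. I would split $S_n$ at $j=\lfloor n/2\rfloor$. For $0\le j\le n/2$ we have $(1-j/n)^{-\alpha}\le 2^{\alpha}$, so these terms are dominated by the summable sequence $2^{\alpha}d^{-j}$, and dominated convergence for series gives $\sum_{0\le j\le n/2} d^{-j}(1-j/n)^{-\alpha}\to\sum_{j\ge 0} d^{-j}=\frac{d}{d-1}$. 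For $n/2<j\le n-1$, put $m=n-j$; then $d^{-j}(1-j/n)^{-\alpha}=d^{-n} n^{\alpha} d^{m} m^{-\alpha}\le d^{-n} n^{\alpha} d^{n/2}$, so this entire block is at most $\tfrac{n}{2}\,n^{\alpha} d^{-n/2}\to 0$. Combining the two blocks yields $S_n\to\frac{d}{d-1}$, which completes Step 2 and hence, via Step 1, the lemma.

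\textbf{Main obstacle.} The one delicate point is precisely the block where $j$ is close to $n$ (equivalently, $k$ small), where the polynomial factor $(1-j/n)^{-\alpha}$ is unbounded — as large as $n^{\alpha}$ at $j=n-1$. The resolution is that the geometric weight $d^{-j}$ decays fast enough to swamp any fixed polynomial factor, which is exactly the $d^{-n/2}$-type bound above; once this observation is made, the remainder is routine. A mild secondary point is making the ``dominated convergence for series'' step rigorous, but this is standard (e.g. an $\varepsilon/3$ argument splitting off a fixed finite head of the series).
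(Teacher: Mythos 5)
Your proof is correct, and it follows the same overall route as the paper: reduce to the sum $\sum_k d^k k^{-\alpha}$ by summation of equivalents, divide by $c\,d^n n^{-\alpha}$, reindex with $j=n-k$, and show the resulting sum $S_n=\sum_j d^{-j}(1-\tfrac{j}{n})^{-\alpha}$ tends to $\tfrac{d}{d-1}$. The noteworthy difference is at the final limit, which is exactly the point you flag as the ``main obstacle.'' The paper disposes of it by asserting the bounds $1\leq(1-\tfrac{j}{n})^{-\alpha}\leq(1-\tfrac{1}{n})^{-\alpha}$, but that upper bound is wrong: for $\alpha>0$ the function $j\mapsto(1-\tfrac{j}{n})^{-\alpha}$ is \emph{increasing} in $j$, so $(1-\tfrac{1}{n})^{-\alpha}$ is its minimum over $j\geq 1$, not its maximum; near $j=n-1$ the factor is as large as $n^{\alpha}$, and the claimed squeeze does not close. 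Your treatment --- splitting at $j=\lfloor n/2\rfloor$, using $2^{\alpha}d^{-j}$ as a summable dominating sequence on the first block and the crude bound $\tfrac{n}{2}\,n^{\alpha}d^{-n/2}\to 0$ on the second --- is the correct repair, and it is the standard one: the geometric decay absorbs the polynomially large factor. So your argument is not merely equivalent to the paper's but strictly more rigorous at the one step where rigor is actually needed. Two minor points: your restriction to $c>0$ is appropriate (for $c=0$ the hypothesis degenerates and the statement is vacuous as written), and your reindexed range $j=0,\dots,n-1$ is the right one (the paper's $\sum_{j=1}^{n}$ is an off-by-one slip, harmless but symptomatic of the same inattention to the endpoints).
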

\begin{proof}
Obviously the sequence $\sum  d^n \cdot n^{-\alpha}$ diverges, and therefore we have $\sum_{k=1}^n u_k \underset{+\infty}{\sim} c \sum_{k=1}^n  d^k \cdot k^{-\alpha}$. Then,
$$\frac{c \sum_{k=1}^n  d^k \cdot k^{-\alpha}}{c \cdot \frac{d}{d-1}\cdot d^n\cdot n^{-\alpha}} = \frac{d-1}{d}\sum_{k=1}^n d^{k-n} \left(\frac{k}{n}\right)^{-\alpha} \underset{j=n-k}{=} \frac{d-1}{d}\sum_{j=1}^n \left(1-\frac{j}{n}\right)^{-\alpha} d^{-j}.$$

With bounds $1\leq \left(1-\frac{j}{n}\right)^{-\alpha}\leq \left(1-\frac{1}{n}\right)^{-\alpha}$, it is easy to see that the right-hand term goes to $1$ as $n\to\infty$.
\end{proof}

From \cite[Section~2.3.4.4]{knuth1997art}, we have $a_n \underset{+\infty}{\sim} c\cdot d^n\cdot n^{-3/2}$ with $c\approx 0.439924$ and $d\approx 2.955765$. From Lemma~\ref{lemma:technical}, we immediately derive $b_n \underset{+\infty}{\sim} c\cdot \frac{d}{d-1} \cdot d^n \cdot n^{-3/2}$. Finally, noticing that $i\cdot a_i \underset{+\infty}{\sim} c\cdot d^i\cdot i^{-1/2}$, we derive from Lemma~\ref{lemma:technical} that $\sum_{i=1}^n i\cdot a_i \underset{+\infty}{\sim} c \cdot \frac{d}{d-1} \cdot d^n \cdot n^{-1/2}$.

We now derive our main results.

\begin{lemma}
$k n\underset{+\infty}{\sim}t_k$ -- with $n$ as defined in Subsection~\ref{ss:special_tree}, i.e. so that $b_n<k\leq b_{n+1}$.
\end{lemma}
\begin{proof}
We have
$$\frac{t_k}{kn}=\frac{1+\sum_{i=1}^n i\cdot a_i + (n+1)(k-b_n)}{kn} = \frac{1+(n+1)k}{nk} + \frac{\sum_{i=1}^n i\cdot a_i - (n+1)b_n}{nk}.$$

First, the left-hand term tends to $1$ as $k\to \infty$. We now prove that the right-hand term tends to $0$ as $k\to\infty$. Since $b_n<k$, we have
$$\left|\frac{\sum_{i=1}^n i\cdot a_i - (n+1)b_n}{nk}\right| < \left|\frac{\sum_{i=1}^n i\cdot a_i - nb_n}{nb_n}\right|+\frac{b_n}{nb_n}.$$
Notice that $\sum_{i=1}^ni\cdot a_i \sim nb_n$. By definition of $\sim$ and $o(\cdot)$ notations, for any (positive) sequences $u_n$ and $v_n$, $u_n\sim v_n \iff u_n-v_n = o(v_n) \iff \displaystyle\frac{u_n-v_n}{v_n}\to 0$, hence the result.
\end{proof}

\begin{lemma}
$\ln k \underset{+\infty}{\sim} \ln t_k \underset{+\infty}{\sim} n \ln d$ with $d\approx 2.955765$.
\end{lemma}
\begin{proof}
By definition, $b_n<k\leq b_{n+1}$ and $\displaystyle\sum_{i=1}^n i \cdot a_i< t_k\leq 1+\sum_{i=1}^{n+1} i\cdot a_i$. Hence, $k\sim b_n$ and $t_k\sim \displaystyle\sum_{i=1}^n i\cdot a_i$.

Taking the logarithm of the asymptotic equivalents provided earlier on both equations yields the result.
\end{proof}

Combining the two previous lemmas, we derive the following two corollaries.

\begin{corollary}
$k=O\left(\displaystyle\frac{t_k}{\ln(t_k)}\right)$
\end{corollary}

From Lemma~\ref{lemma:bound}, and since $x\mapsto \displaystyle\frac{x}{\log x}$ is increasing (for $x\geq 3$), we get that for any tree $T$, $\width(T) = O\left(\displaystyle\frac{\#T}{\log\#T}\right)$ -- hence Lemma~\ref{width_bound_pigeonhole} holds.

\begin{corollary}
$k\ln k=O(t_k)$.
\end{corollary}

From Lemma~\ref{lemma:bound}, we get that for any tree $T$, $\width(T)\ln\width(T) = O(\#T)$, which ends the proof of Lemma~\ref{width_bound}.

\section{Proof of Theorem~\ref{th:complexity_primes}}\label{app:proof_complexity_primes}

Following Proposition~\ref{prop:primes_generation}, the generation of primes is done in $O(\#T\log\log\#T)$. This term vanishes in the final complexity due to the upcoming term in $\#T\log \#T$. Let us denote by $p_{\width(T)}$ the largest prime needed by the algorithm. Fix $d\in[\![0,\depth(T)]\!]$ and $u\in T^d$. 

\paragraph{Complexity of multiplication} The complexity for multiplying two $n$-bits numbers varies from $O(n^2)$ for usual schoolbook algorithm, to $O(n\log n)$ \cite{harvey2021integer} -- even if this result is primarily theoretical, by the authors' own admission. Karatsuba algorithm, which is widely used, runs in $O(n^{\log 3})$ \cite{karatsuba1962multiplication}. This algorithm is actually used in Python when the numbers get large, and schoolbook otherwise. Let us denote the complexity of multiplication as $n\cdot M(n)$, with $M(n)$ varying from $\log n$ to $n$ depending on the algorithm used.

Multiplying two $n$-bits numbers together yields a $2n$-bits number. To compute the product of $m$ numbers of $n$ bits, we adopt a divide and conquer approach and multiply two numbers which themselves are the recursive product of $m/2$ numbers. This strategy leads to a complexity of $O(mn\cdot M(mn))$ by virtue of the Master Theorem \cite{bentley1980general}. Since computing $N(u)$ implies multiplying $\deg(u)$ primes with at most $\log p_{\width(T)}$ bits, this lead to a complexity of $O\left(\deg(u)\log p_{\width(T)} \cdot M\left(\deg(u) \log p_{\width(T)}\right)\right)$. Using \eqref{primes_bound} we have $p_{\width(T)}< \width(T)\left( \ln\width(T) + \ln\ln \width(T)\right)$. With Lemma~\ref{width_bound}, we have $p_{\width(T)}=O(\#T)$; thus a final complexity of $O(\deg(u) \log \#T \cdot M\left(\deg(u) \log\#T\right))$ for computing $N(u)$.

\paragraph{Other points} Testing whether or not $f(N(u))$ is defined in line~12 can be done in $O(1)$ since $N(u)$ is an integer, as per our assumption of perfect hash tables working with integers, strings and tuples. 

For comparing the multisets in line~16, we resort to pigeonhole sort as for Algorithm~\ref{algo:ideal_ahu}. Classic pigeonhole would have complexity  $O(\#T^d + p_n)$, where $p_n$ is the biggest prime in the list; but many holes will be unnecessary (as $c(u)$ is necessarily prime). Using a perfect hash table that associate to the $i$-th prime number the integer $i$, one can use only $n$ holes, one for each prime number, which reduces the complexity to $O(\#T^d + n)$. Since the primes are reallocated at each level, at level $d$ we need as many primes as there are different equivalence classes at that level -- i.e. $\#\lbrace c(u): u\in T^d\rbrace$. This number is $\leq \#T^d$, therefore the complexity of the sort collapses to $O(\#T^d)$.

\paragraph{Conclusion} Processing level $d$ thus requires $$O\left(\sum_{u\in T^d} \big[\deg(u) \cdot \log \#T \cdot M\left(\deg(u) \cdot \log \#T\right)\big] + \#T^d\right).$$

First, notice that $\sum_{u\in T^d} \deg(u) = \# T^{d-1}$. Bounding other occurrences of $\deg(u)$ by $\deg(T)$ and summing over $d$ leads to the claim.

\end{document}